\newcommand{\Q}{\mathbb{Q}}
\newcommand{\F}{\mathbb{F}}
\newcommand{\floor}[1]{\lfloor #1 \rfloor}
\newcommand{\ceil}[1]{\left\lceil #1 \right\rceil}
\newcommand{\braket}[2]{\langle #1|#2 \rangle}
\newcommand{\ketbra}[2]{| #1\rangle \langle #2|}
\def\01{\{0,1\}}
\newcommand{\Tr}{\mathrm{Tr}}
\newcommand{\IP}{\mathrm{IP}}
\newcommand{\diag}{\mathrm{diag}}
\newcommand{\conv}{\mathrm{conv}}
\newcommand{\ignore}[1]{}
\newtheorem{theorem}{Theorem}
\newtheorem{lemma}[theorem]{Lemma}
\newtheorem{corollary}[theorem]{Corollary}
\newtheorem{proposition}[theorem]{Proposition}
\newtheorem{definition}[theorem]{Definition}
\newtheorem{fact}[theorem]{Fact}
\newcommand{\thmref}[1]{\hyperref[#1]{{Theorem~\ref*{#1}}}}
\newcommand{\lemref}[1]{\hyperref[#1]{{Lemma~\ref*{#1}}}}
\newcommand{\corref}[1]{\hyperref[#1]{{Corollary~\ref*{#1}}}}
\newcommand{\eqnref}[1]{\hyperref[#1]{{Equation~(\ref*{#1})}}}
\newcommand{\claimref}[1]{\hyperref[#1]{{Claim~\ref*{#1}}}}
\newcommand{\remarkref}[1]{\hyperref[#1]{{Remark~\ref*{#1}}}}
\newcommand{\propref}[1]{\hyperref[#1]{{Proposition~\ref*{#1}}}}
\newcommand{\factref}[1]{\hyperref[#1]{{Fact~\ref*{#1}}}}
\newcommand{\defref}[1]{\hyperref[#1]{{Definition~\ref*{#1}}}}
\newcommand{\exampleref}[1]{\hyperref[#1]{{Example~\ref*{#1}}}}
\newcommand{\hypref}[1]{\hyperref[#1]{{Hypothesis~\ref*{#1}}}}
\newcommand{\secref}[1]{\hyperref[#1]{{Section~\ref*{#1}}}}
\newcommand{\chapref}[1]{\hyperref[#1]{{Chapter~\ref*{#1}}}}
\newcommand{\apref}[1]{\hyperref[#1]{{Appendix~\ref*{#1}}}}
\newcommand\rank{\mbox{\tt {rank}}\xspace}
\newcommand\prank{\mbox{\tt {rank}$_{\tt psd}$}\xspace}
\newcommand\rootrank{\mbox{\tt {rank}$_{\sqrt{ }}$}\xspace}
\newenvironment{proof}[1][Proof: ]
{\noindent {\bf #1}}
{{\hfill $\Box$}\\
 \smallskip}
\begin{document}

\title{The square root rank of the correlation polytope is exponential}
\author{Troy Lee\thanks{School of Physics and Mathematical Sciences, Nanyang Technological
University and Centre for Quantum Technologies, Singapore.
Email: \{troyjlee, weizhaohui\}@gmail.com}
\and Zhaohui Wei\footnotemark[1]}
\date{}
\maketitle

\begin{abstract}
The square root rank of a nonnegative matrix $A$ is the minimum rank of a matrix $B$ such that
$A=B \circ B$, where $\circ$ denotes entrywise product.  We show that the square root rank of the
slack matrix of the correlation polytope is exponential.  Our main technique is a way to lower bound
the rank of certain matrices under arbitrary sign changes of the entries using properties of the
roots of polynomials in number fields.  The square root rank is an upper bound on the positive
semidefinite rank of a matrix, and corresponds the special case where all matrices in the
factorization are rank-one.
\end{abstract}

\section{Introduction}
The square root rank of a nonnegative matrix $A$ is the minimum rank of a matrix $B$ such that
$A=B \circ B$, where $\circ$ denotes the entrywise product.  The freedom of the matrix $B$ is to multiply each
entry $\sqrt{A(i,j)}$ by $\pm 1$ in an effort to decrease the rank, and this substantial freedom is what makes
showing lower bounds on the square root rank difficult.  It is known that the problem of verifying if the square root
rank is less than a given value is NP-hard \cite{FawziGouveiaParriloRobinsonThomas14}.

The motivation for studying square root rank is that it is an upper
bound on the positive semidefinite rank
\cite{GouveiaParriloThomas12, Zhang12}.  A positive semidefinite
(PSD) factorization of an $m$-by-$n$ nonnegative matrix $A$ of size
$r$ is given by $r$-by-$r$ real PSD matrices $E_1, \ldots,
E_m,F_1,\ldots, F_n$ such that $A(i,j)=\Tr(E_i F_j)$.  The square
root rank exactly corresponds to the minimum size of a PSD
factorization where all the PSD matrices are rank-one.

The positive semidefinite
rank has been defined relatively recently in the context of combinatorial optimization.
Many combinatorial
optimization problems can be represented as optimizing a linear function over a polytope $P$ formed by the
convex hull of feasible solutions.  A natural way to approach this problem is via linear programming and here
the number of constraints in the linear program is given by the number of facets of $P$.

A remarkable fact is that sometimes there is a higher dimensional polytope $Q$ with fewer facets that projects to $P$.
In this way, the original optimization problem can be transferred to an easier optimization problem over $Q$.
The polytope $Q$ is called a linear extension of $P$ and the minimum number of facets of such a $Q$ is the linear
extension complexity of $P$.

A classic paper of Yannakakis beautifully characterizes the linear
extension complexity \cite{Yannakakis91}. For a polytope $P$ with
facet inequalities $a_i x \le b_i$ and vertex set $V=\{v_j\}$, the
slack matrix of $P$ is the matrix with the $(i,j)$ entry equal to
$b_i - a_i v_j$.  Yannakakis showed that the linear extension
complexity of $P$ is equal to the nonnegative rank of the slack
matrix of $P$. The nonnegative rank of a nonnegative matrix $A$ is
the minimum number of nonnegative rank-one matrices that sum to $A$.
Answering a long standing open question, Fiorini et al.
\cite{FioriniMassarPokuttaTiwaryDewolf2012} showed exponential lower
bounds on the linear extension complexity of many polytopes of
interest, including the correlation and Traveling Salesman
polytopes.  Rothvo{\ss} followed this by showing an exponential
lower bound on the linear extension complexity of the matching
polytope \cite{Rothvoss14}, even though finding a maximum matching
can be done efficiently.

As semidefinite programming is more powerful than linear programming
it is natural to ask the same questions for semidefinite extensions.
A semidefinite extension of a polytope $P$ is an affine slice of the
cone of $n$-by-$n$ positive semidefinite matrices that projects to
$P$.  The proof of Yannakakis can be adapted to this setting, and
Gouveia et al. showed that the semidefinite extension complexity of
$P$ is equal to the PSD rank of the slack matrix of $P$
\cite{GouveiaParriloThomas12}.

The correlation polytope $\text{COR}_n$ is the convex hull of the
rank-one boolean matrices $xx^T$ for $x \in \{0,1\}^n$.  The
correlation polytope is closely related to the cut polytope and has
proven to be the most convenient polytope to study for extension
complexity lower bounds.  In a very recent breakthrough, Lee et al.
have given exponential lower bounds on the PSD-rank of the slack
matrix of the correlation polytope \cite{LeeSR14}.  Before this, no
nontrivial bounds were known on the PSD-rank of the correlation
polytope, and indeed no techniques had been developed to approach
this problem.

Our main result is a lower bound of $3^{n/3-1}$ on the square root rank of the slack matrix of $\text{COR}_n$.  We
do this by showing a severe algebraic limitation to factorizations of the form $A=B \circ B$.  Our techniques are
fairly general and apply to many other matrices, even those that actually have small PSD rank.  Though the main
open
problem of showing an exponential lower bound on the PSD-rank of the correlation polytope has now been
answered, our techniques may still be interesting as many constructions of PSD factorizations are actually
rank-one and so their size corresponds to square root rank.

\section{Preliminaries}

\subsection{Notations and definitions}
Let $[n]=\{1, 2, \ldots, n\}$.
As usual, we refer to the fields of rational, real, and complex numbers as
$\mathbb{Q}, \mathbb{R}$, and $\mathbb{C}$.  A subfield of the real numbers is a field $\F \subseteq R$
that is a subset of the real numbers.  Any subfield of the real numbers contains the rationals $\Q$.

The correlation polytope $\text{COR}_n$ is the convex hull of matrices of the form $xx^T$, where
$x\in\{0,1\}^n$ is a column vector, and $x^T$ is the transpose of
$x$. In other words,
$\text{COR}_n=\conv\{xx^T \in \mathbb{R}^{n\times n} : x\in\{0,1\}^n\}$.

For an $m$-by-$n$ matrix $A$, we refer to the $(i,j)$ entry as
$A(i,j)$.  We use $\circ$ for the entrywise product, that is $(A
\circ B)(i,j)=A(i,j)B(i,j)$. We denote the rank of $A$ by
$\rank(A)$, and if $m=n$, denote the trace as $\Tr(A)=\sum_i
A(i,i)$. If all the entries of $A$ are either zero or positive, we
call $A$ a \emph{nonnegative matrix}.

If a matrix $A$ is nonnegative, its \emph{nonnegative rank}, denoted $\rank_+(A)$, is the minimum number of
rank-one nonnegative matrices that sum to $A$.
The positive semidefinite rank is defined as follows.
\begin{definition}
Let $A$ be a nonnegative $m$-by-$n$ matrix.  A  \emph{positive
semidefinite factorization} (over $\mathbb{R}$) of $A$ of size $r$
is given by $r$-by-$r$ real positive semidefinite matrices $E_1,
\ldots, E_m \in \mathbb{R}^{r\times r}$ and $F_1,\ldots, F_n \in \mathbb{R}^{r \times r}$
satisfying $A(i,j)=\Tr(E_i F_j)$ for all $i=1, \ldots, m$ and $j=1, \ldots, n$.  The \emph{positive
semidefinite rank} denoted $\prank(A)$ of $A$
is the smallest integer~$r$ such that $A$ has a PSD-factorization of
size~$r$.
\end{definition}

The main quantity of interest in this paper is the square root rank.
\begin{definition}
Let $A$ be a nonnegative $m$-by-$n$ matrix. The square root rank of
$A$ is the minimum rank of an $m$-by-$n$ matrix $B$ with $A=B \circ
B$, and is denoted $\rootrank(A)$.
\end{definition}
For a nonnegative matrix $A$, we will use $\sqrt{A}$ for the entrywise square root of $A$, that is
$\sqrt{A}(i,j)=\sqrt{A(i,j)}$.

\subsection{Basic facts about PSD-rank}
In this section we discuss some basic results about the PSD-rank.  Nearly all of these results can be found in the
excellent survey \cite{FawziGouveiaParriloRobinsonThomas14}.
The first fact is an easy lower bound on PSD-rank in terms of the normal rank.
\begin{fact}\label{ft:trivial}
Let $A$ be a nonnegative matrix.  Then
$\prank(A)\geq\sqrt{\rank(A)}$.
\end{fact}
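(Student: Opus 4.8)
The plan is to turn a PSD factorization into an ordinary low-rank factorization by exploiting the fact that the trace inner product is just the standard inner product after vectorization. Suppose $A$ has a PSD factorization of size $r$, say $A(i,j)=\Tr(E_i F_j)$ with $E_1,\ldots,E_m, F_1,\ldots,F_n \in \mathbb{R}^{r\times r}$ all positive semidefinite (in particular symmetric). First I would associate to each matrix $E_i$ the vector $\mathrm{vec}(E_i)\in\mathbb{R}^{r^2}$ obtained by listing its entries, and likewise $\mathrm{vec}(F_j)\in\mathbb{R}^{r^2}$. The key identity is that for symmetric matrices $X,Y$ one has $\Tr(XY)=\sum_{k,\ell}X(k,\ell)Y(k,\ell)=\langle \mathrm{vec}(X),\mathrm{vec}(Y)\rangle$; this is the only computation needed and it is entirely routine.

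With this identity in hand, let $M$ be the $m$-by-$r^2$ matrix whose $i$-th row is $\mathrm{vec}(E_i)$ and let $N$ be the $r^2$-by-$n$ matrix whose $j$-th column is $\mathrm{vec}(F_j)$. Then $(MN)(i,j)=\langle \mathrm{vec}(E_i),\mathrm{vec}(F_j)\rangle=\Tr(E_iF_j)=A(i,j)$, so $A=MN$. Since the inner dimension of this product is $r^2$, we get $\rank(A)\le \rank(M)+0\le r^2$, hence $r\ge\sqrt{\rank(A)}$. Taking $r=\prank(A)$ gives the claim.

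There is no real obstacle in this argument; the only point worth a moment's care is that PSD matrices are symmetric, which is what makes $\Tr(E_iF_j)$ equal to the entrywise inner product rather than requiring a transpose. If one wants, the bound can be tightened slightly by noting that symmetric $r$-by-$r$ matrices span a space of dimension $\binom{r+1}{2}$ rather than $r^2$, giving $\rank(A)\le\binom{r+1}{2}$; but the weaker estimate $\rank(A)\le r^2$ already suffices for the stated inequality.
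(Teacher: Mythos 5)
Your argument is correct: the vectorization identity $\Tr(E_iF_j)=\langle \mathrm{vec}(E_i),\mathrm{vec}(F_j)\rangle$ (valid since PSD matrices are symmetric) gives $A=MN$ with inner dimension $r^2$, hence $\rank(A)\le r^2$ and $\prank(A)\ge\sqrt{\rank(A)}$, and the refinement $\rank(A)\le\binom{r+1}{2}$ via the dimension of the symmetric matrices is also right. The paper states this fact without proof, citing the survey of Fawzi et al., and the standard proof there is exactly this factorization-through-$\mathbb{R}^{r^2}$ argument, so you have reproduced the intended reasoning.
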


It is also easy to see that the nonnegative rank is an upper bound on the PSD-rank.
\begin{fact}\label{ft:rk+}
Let $A$ be nonnegative matrix.  Then $\prank(A) \le \rank_+(A)$.
\end{fact}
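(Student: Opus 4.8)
The plan is to turn a nonnegative rank factorization directly into a positive semidefinite factorization of the same size, using diagonal matrices. First I would unpack the hypothesis: if $\rank_+(A) = r$, then $A$ can be written as a sum of $r$ nonnegative rank-one matrices, $A = \sum_{k=1}^r u_k v_k^T$, where each $u_k \in \R^m$ and each $v_k \in \R^n$ has only nonnegative entries. Equivalently, $A(i,j) = \sum_{k=1}^r u_k(i)\, v_k(j)$ for all $i \in [m]$ and $j \in [n]$.

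Next I would build the factorization. For each row index $i$, set $E_i = \diag\big(u_1(i), \ldots, u_r(i)\big)$, and for each column index $j$, set $F_j = \diag\big(v_1(j), \ldots, v_r(j)\big)$. Because the vectors $u_k$ and $v_k$ are nonnegative, each $E_i$ and each $F_j$ is an $r$-by-$r$ diagonal matrix with nonnegative diagonal entries, hence real positive semidefinite. Then I would compute the trace: for diagonal matrices the trace of the product is the inner product of the diagonals, so $\Tr(E_i F_j) = \sum_{k=1}^r u_k(i)\, v_k(j) = A(i,j)$. Thus $E_1, \ldots, E_m, F_1, \ldots, F_n$ is a PSD factorization of $A$ of size $r$, and therefore $\prank(A) \le r = \rank_+(A)$.

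There is no genuine obstacle here; this is a routine observation. The only points to keep straight are that a nonnegative diagonal matrix is automatically PSD, and that the $k$-th rank-one term $u_k v_k^T$ of the nonnegative decomposition should be recorded as the $k$-th diagonal coordinate of every $E_i$ and $F_j$, so that the trace condition reproduces exactly the entrywise identity $A(i,j) = \sum_k u_k(i) v_k(j)$.
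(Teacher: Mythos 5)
Your proof is correct and matches the paper's approach exactly: the paper proves this fact by the same one-line observation that a nonnegative rank factorization yields a PSD factorization by diagonal matrices, which you have simply written out in full detail. Nothing further is needed.
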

A nonnegative rank factorization corresponds to a PSD-factorization by diagonal matricies.

At the other end of the spectrum, one can consider
PSD-factorizations by rank-one matrices. An equivalent
characterization of the square root rank is the minimal size of a
PSD-factorization by rank-one PSD matrices.
\begin{fact}(\cite{GouveiaRobinsonThomas13})\label{ft:rankone}
Let $A$ be a nonnegative $m$-by-$n$ matrix. Then
$\rootrank(A)$ is equal to the minimum size of a PSD factorization $A(i,j) = \Tr(E_i F_j)$ where all
the PSD matrices $E_1, \ldots, E_m, F_1, \ldots, F_n$ are rank-one.
\end{fact}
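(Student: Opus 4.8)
The plan is to strip both quantities down to the underlying vector data and check they describe the same object. The only algebraic input is the spectral theorem over $\mathbb{R}$: a real symmetric rank-one matrix that is positive semidefinite is exactly one of the form $vv^T$ for a nonzero real vector $v$, and conversely every $vv^T$ is PSD of rank one. So a rank-one PSD factorization $A(i,j)=\Tr(E_iF_j)$ of size $r$ is the same data as vectors $u_1,\dots,u_m,v_1,\dots,v_n\in\mathbb{R}^r$ with $E_i=u_iu_i^T$, $F_j=v_jv_j^T$, and for such matrices one computes $\Tr(E_iF_j)=\Tr(u_iu_i^Tv_jv_j^T)=(u_i^Tv_j)^2$. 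Thus a size-$r$ rank-one PSD factorization of $A$ is precisely a choice of $u_i,v_j\in\mathbb{R}^r$ with $A(i,j)=(u_i^Tv_j)^2$.

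For one inequality I would start from $\rootrank(A)=r$, so there is a real matrix $B$ of rank $r$ with $A=B\circ B$, i.e.\ $A(i,j)=B(i,j)^2$. Take a rank factorization $B=UV^T$ with $U\in\mathbb{R}^{m\times r}$ and $V\in\mathbb{R}^{n\times r}$, and let $u_i$, $v_j$ be the rows of $U$ and $V$; then $B(i,j)=u_i^Tv_j$, so setting $E_i:=u_iu_i^T$ and $F_j:=v_jv_j^T$ gives rank-one real PSD matrices of size $r$ with $\Tr(E_iF_j)=(u_i^Tv_j)^2=A(i,j)$. Hence $A$ admits a rank-one PSD factorization of size $r$, so the minimum such size is at most $\rootrank(A)$.

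For the reverse, start from a rank-one PSD factorization of size $s$, write $E_i=u_iu_i^T$ and $F_j=v_jv_j^T$ with $u_i,v_j\in\mathbb{R}^s$, and define the real matrix $B$ by $B(i,j)=u_i^Tv_j$. Collecting the $u_i$ as rows of an $m$-by-$s$ matrix $U$ and the $v_j$ as rows of an $n$-by-$s$ matrix $V$ shows $B=UV^T$, so $\rank(B)\le s$; and $B(i,j)^2=(u_i^Tv_j)^2=\Tr(E_iF_j)=A(i,j)$, so $A=B\circ B$ and therefore $\rootrank(A)\le\rank(B)\le s$. Taking the minimum over $s$ and combining with the previous paragraph yields equality.

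I do not expect a real obstacle: the argument is a dictionary between ``rank-$r$ factorization of $B$'' and ``$r$-dimensional vectors,'' with $\Tr(u u^Tvv^T)=(u^Tv)^2$ as the bridge. The single point requiring care is that the factorization is over $\mathbb{R}$, so the vectors extracted from the $E_i,F_j$ are genuinely real and hence $B$ is a real matrix, which is exactly what the definition of $\rootrank$ demands; over $\mathbb{C}$ one would instead land on $A(i,j)=|u_i^*v_j|^2$, a distinct notion.
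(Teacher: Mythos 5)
Your proof is correct: the paper states this fact with a citation to Gouveia--Robinson--Thomas and gives no proof of its own, and your argument is exactly the standard dictionary between a rank-$r$ factorization $B=UV^T$ of a sign-chosen square root and rank-one PSD factors $E_i=u_iu_i^T$, $F_j=v_jv_j^T$ via $\Tr(u_iu_i^Tv_jv_j^T)=(u_i^Tv_j)^2$, including the right caution that the factorization is over $\mathbb{R}$. The only pedantic point is that a zero row of $B$ yields $E_i=0$, which has rank zero; as usual one reads ``rank-one'' as ``rank at most one'' (or assumes no zero rows/columns), and then the equivalence is exactly as you state.
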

%

In particular, this characterization shows the following.
\begin{corollary}[\cite{GouveiaParriloThomas12, Zhang12}]
\label{ft:upperbound}
For a nonnegative matrix $A$
\[
\prank(A) \le \rootrank(A)
\]
\end{corollary}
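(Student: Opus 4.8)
The plan is to derive the inequality immediately from \factref{ft:rankone}. That fact states that $\rootrank(A)$ equals the minimum size $r$ of a PSD factorization $A(i,j) = \Tr(E_i F_j)$ in which every factor $E_1,\dots,E_m,F_1,\dots,F_n$ is a rank-one PSD matrix. Dropping the rank-one restriction, such a factorization is exactly an ordinary PSD factorization of $A$ of size $r$ in the sense of the definition given above; hence the minimum size of a PSD factorization of $A$, which is by definition $\prank(A)$, is at most $r = \rootrank(A)$. This is the claimed inequality, so the corollary is a one-line consequence of \factref{ft:rankone}.

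For concreteness I would also record the underlying direct argument. Suppose $B$ is an $m$-by-$n$ real matrix of rank $r = \rootrank(A)$ with $A = B \circ B$, and take a rank-$r$ factorization $B(i,j) = \langle u_i, v_j \rangle$ with $u_i, v_j \in \mathbb{R}^r$. Setting $E_i = u_i u_i^T$ and $F_j = v_j v_j^T$ gives $r$-by-$r$ PSD matrices with $\Tr(E_i F_j) = \langle u_i, v_j \rangle^2 = B(i,j)^2 = A(i,j)$, so the $E_i$ and $F_j$ form a PSD factorization of $A$ of size $r$, and therefore $\prank(A) \le r = \rootrank(A)$.

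I expect no real obstacle here: the statement is a direct corollary of the rank-one characterization of $\rootrank$ in \factref{ft:rankone}, and the only computation needed is the routine identity $\Tr\big((u u^T)(v v^T)\big) = \langle u, v \rangle^2$ for real vectors $u, v$.
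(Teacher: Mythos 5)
Your proposal is correct and matches the paper's approach: the paper deduces the corollary directly from \factref{ft:rankone}, noting that a rank-one PSD factorization is in particular a PSD factorization, which is exactly your one-line argument (your explicit computation with $E_i = u_iu_i^T$, $F_j = v_jv_jv_j^T$ corrected to $v_jv_j^T$, is just the standard content behind that fact).
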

It can sometimes be difficult to see how to use the power of positive semidefinite factorizations to show upper
bounds on the PSD-rank.  For this reason, many upper bounds on PSD-rank simply use
\factref{ft:upperbound}.  This upper bound can also be tight in a nontrivial way.

An example of this can be seen with the inner product matrix. This matrix has been extensively
studied in communication complexity and is defined as
$\IP_n(x,y)=\sum^n_{i=1}x_iy_i \bmod 2$ for $x,y \in \{0,1\}^n$.  Letting $N=2^n$ be the size of the matrix
$IP_n$,  Lee et al. \cite{LeeWeideWolf14} prove that
$\rootrank(IP_n)\leq 2\sqrt{N}$. Note that $IP_n$ is full-rank, thus
according to \factref{ft:trivial} it holds that $\prank(IP_n)\geq
\sqrt{N}$, which implies that the upper bound given by
\factref{ft:upperbound} is tight in this case up to a small constant
factor.  Note that $\sqrt{IP_n} = IP_n$ and thus has high rank---the construction crucially uses the freedom of
toggling the sign of each entry.

The usual rank of a matrix $A$ is equal to the minimal number of rank-one matrices that sum to $A$.  There is no
known analogous ``decomposition'' formulation for the PSD-rank.  The following lemma, however, does give
an approximate
characterization of PSD-rank in terms of a decomposition of matrices with rank-one PSD factorizations.
We first learned of this lemma from Ronald de Wolf \cite{deWolf12}.
\begin{lemma}\label{lem:decomposition}
Suppose that the PSD-rank of $A$ is $d$.  Then there is a decomposition
\[
A = \sum_{i=1}^{d^2} N_i \circ N_i
\]
where each $N_i$ is of rank at most $d$.
\end{lemma}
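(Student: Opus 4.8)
The plan is to start from an optimal PSD factorization of $A$ and ``unfold'' each PSD matrix into a Gram form. Let $E_1,\dots,E_m,F_1,\dots,F_n\in\mathbb{R}^{d\times d}$ be positive semidefinite matrices realizing $A(i,j)=\Tr(E_iF_j)$. Since every PSD matrix has a PSD square root, write $E_i=U_iU_i^{T}$ and $F_j=V_jV_j^{T}$ with $U_i=E_i^{1/2}$ and $V_j=F_j^{1/2}$ in $\mathbb{R}^{d\times d}$, and let $u_i^{(k)}$ (resp. $v_j^{(l)}$) denote the $k$-th column of $U_i$ (resp. the $l$-th column of $V_j$).

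The key computation is the Frobenius-norm expansion of the trace: putting $M_{ij}=U_i^{T}V_j$, we get $\Tr(E_iF_j)=\Tr(U_iU_i^{T}V_jV_j^{T})=\Tr(M_{ij}M_{ij}^{T})=\fro{M_{ij}}^{2}=\sum_{k=1}^{d}\sum_{l=1}^{d}\langle u_i^{(k)},v_j^{(l)}\rangle^{2}$. For each ordered pair $(k,l)\in[d]\times[d]$, define the $m$-by-$n$ matrix $N_{k,l}$ by $N_{k,l}(i,j)=\langle u_i^{(k)},v_j^{(l)}\rangle$. Summing the previous identity over $i$ and $j$ yields
\[
A=\sum_{(k,l)\in[d]\times[d]} N_{k,l}\circ N_{k,l},
\]
a decomposition into $d^{2}$ entrywise-square terms; after relabeling the pairs $(k,l)$ by a single index in $\{1,\dots,d^{2}\}$ this is exactly the asserted form.

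It remains to bound $\rank(N_{k,l})$. Expanding the inner product coordinatewise, $N_{k,l}(i,j)=\sum_{r=1}^{d}(U_i)_{r,k}(V_j)_{r,l}$, so $N_{k,l}$ is a sum of $d$ rank-one matrices, the $r$-th being the outer product of the column vector $\big((U_i)_{r,k}\big)_{i\in[m]}$ with the row vector $\big((V_j)_{r,l}\big)_{j\in[n]}$. Hence $\rank(N_{k,l})\le d$, as needed.

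This argument is essentially bookkeeping, and I do not expect a real obstacle: the only point to be mildly careful about is ensuring the factors $U_i,V_j$ have exactly $d$ columns, which the PSD square root guarantees (alternatively one can pad a Cholesky factor with zero columns, these contributing only zero matrices to the sum). The substance is simply the observation that the Frobenius expansion of $\Tr(E_iF_j)$ naturally produces $d^{2}$ matrices of rank at most $d$ whose entrywise squares reconstruct $A$.
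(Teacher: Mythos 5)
Your proof is correct and follows essentially the same route as the paper: the paper writes each $E_i$ and $F_j$ as a sum of $d$ rank-one terms via spectral decomposition and defines the $d^2$ matrices of pairwise inner products, which is exactly your construction with the columns of the square-root factors playing the role of the (unnormalized) eigenvectors. The entrywise expansion of $\Tr(E_iF_j)$ and the rank-at-most-$d$ bound (inner products of vectors in $\mathbb{R}^d$) coincide with the paper's argument.
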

\begin{proof}
Suppose $E_x$ and $F_y$ is an optimum PSD-factorization for $A$,
i.e., $A(x,y)=\Tr(E_xF_y)$ and $E_x,F_y$ are $d$-by-$d$ PSD matrices. For each $x$ and $y$, let
$E_x=\sum_{k_1=1}^d \ketbra{\alpha_x^{k_1}}{\alpha_x^{k_1}}$ and
$F_y=\sum_{k_2=1}^d\ketbra{\beta_y^{k_2}}{\beta_y^{k_2}}$ be
spectral decompositions of $E_x$ and $F_y$. Then
\[
A(x,y) =
\sum_{k_1,k_2=1}^{d}|\braket{\alpha_x^{k_1}}{\beta_y^{k_2}}|^2.
\]
For each $k_1$ and $k_2$, define a matrix $A_{k_1,k_2}$ by setting
the entries as
$A_{k_1,k_2}(x,y)=\braket{\alpha_x^{k_1}}{\beta_y^{k_2}}$.  Then its
rank is at most $d$ and $A=\sum_{k_1,k_2=1}^{d}
A_{k_1,k_2} \circ A_{k_1,k_2}$.
\end{proof}

\section{Square root rank of the correlation polytope}
In this section we prove that the square root rank of the correlation polytope $\text{COR}_n$ is at least
$3^{n/3-1}$.
Our approach uses an algebraic method to lower bound the rank of certain matrices based on the roots of
their characteristic polynomials.

For a univariate polynomial $q(x)$ with real coefficients, a familiar theorem states that the multiplicity of $a+bi$
and $a-bi$ as
roots of $q$ is the same.  The key to our lower bounds will be the following generalization of this to subfields of the
real numbers.  A similar statement can be found in any textbook on Galois theory, see for example Lemma 5.6 of
\cite{Stewart04}.
We include a proof for completeness.
\begin{theorem}
\label{thm:key}
Let $\F$ be a subfield of the real numbers and $p$ a prime such that $\sqrt{p} \not \in \F$.  Then for any univariate
polynomial $q(x)$ with coefficients in $\F$ the multiplicity of $\sqrt{p}$ and $-\sqrt{p}$ as roots of $q$ is the same.
\end{theorem}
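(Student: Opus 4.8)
The plan is to pass to the quadratic field extension $\F(\sqrt{p})$ and exploit its unique nontrivial automorphism. First I would observe that since $\sqrt{p} \notin \F$, the degree-two polynomial $x^2 - p$ has no root in $\F$ and is therefore irreducible over $\F$; hence $[\F(\sqrt{p}):\F] = 2$ and every element of $\F(\sqrt{p})$ can be written uniquely as $a + b\sqrt{p}$ with $a, b \in \F$. I then define $\sigma \colon \F(\sqrt{p}) \to \F(\sqrt{p})$ by $\sigma(a + b\sqrt{p}) = a - b\sqrt{p}$ and verify, using only that $(\sqrt{p})^2 = p \in \F$, that $\sigma$ is a ring homomorphism fixing $\F$ pointwise; being a nonzero homomorphism of fields it is injective, and it is clearly surjective, so $\sigma$ is a field automorphism of $\F(\sqrt{p})$.

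Next I would work in $\F(\sqrt{p})[x]$. Let $k$ be the multiplicity of $\sqrt{p}$ as a root of $q$, and factor $q(x) = (x - \sqrt{p})^k\, r(x)$ with $r(\sqrt{p}) \neq 0$. Extending $\sigma$ coefficientwise to an automorphism of $\F(\sqrt{p})[x]$ (which fixes $x$), one has the identity $\sigma(f)(\sigma(\alpha)) = \sigma(f(\alpha))$ for every $f \in \F(\sqrt{p})[x]$ and $\alpha \in \F(\sqrt{p})$. Since the coefficients of $q$ lie in $\F$, we get $\sigma(q) = q$, and applying $\sigma$ to the factorization gives
\[
q(x) = \sigma(q)(x) = \bigl(x - \sigma(\sqrt{p})\bigr)^k\, \sigma(r)(x) = (x + \sqrt{p})^k\, \sigma(r)(x).
\]
Moreover $\sigma(r)(-\sqrt{p}) = \sigma(r)\bigl(\sigma(\sqrt{p})\bigr) = \sigma\bigl(r(\sqrt{p})\bigr) \neq 0$ because $\sigma$ is injective and $r(\sqrt{p}) \neq 0$. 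Hence $-\sqrt{p}$ is a root of $q$ of multiplicity exactly $k$, which is the claim; the situation is symmetric in $\sqrt{p}$ and $-\sqrt{p}$, so no separate argument is needed for the reverse direction.

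I expect the only subtle point to be the well-definedness of $\sigma$, that is, the irreducibility of $x^2 - p$ over the possibly exotic field $\F$; but this is immediate from the elementary fact that a quadratic over a field is reducible exactly when it has a root, together with the hypothesis $\sqrt{p} \notin \F$ (and note $-\sqrt{p} \notin \F$ as well, since $\F$ is closed under negation). Everything else is bookkeeping with the homomorphism $\sigma$. If one wishes to avoid automorphisms altogether, the same conclusion follows by a division argument: with $k, l$ the multiplicities of $\sqrt{p}, -\sqrt{p}$ in $q$, the coprime factors $(x - \sqrt{p})^k$ and $(x + \sqrt{p})^l$ both divide $q$ in $\F(\sqrt{p})[x]$, so $(x^2 - p)^{\min(k,l)}$ divides $q$ there, and since this divisor and $q$ lie in $\F[x]$, the division algorithm in $\F[x]$ yields $q = (x^2 - p)^{\min(k,l)} s$ with $s \in \F[x]$; writing $s(\sqrt{p}) = a + b\sqrt{p}$ and $s(-\sqrt{p}) = a - b\sqrt{p}$ with $a, b \in \F$, and using that $1, \sqrt{p}$ are linearly independent over $\F$, one sees $s(\sqrt{p}) = 0 \iff s(-\sqrt{p}) = 0$, which forces $k = l$.
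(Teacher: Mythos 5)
Your proof is correct, but your main argument goes by a genuinely different route than the paper. You pass to the quadratic extension $\F(\sqrt{p})$ (irreducibility of $x^2-p$ gives $[\F(\sqrt{p}):\F]=2$), use the conjugation automorphism $\sigma(a+b\sqrt{p})=a-b\sqrt{p}$, and apply $\sigma$ coefficientwise to the factorization $q=(x-\sqrt{p})^k r$ to transport the multiplicity of $\sqrt{p}$ to $-\sqrt{p}$; all the steps (well-definedness of $\sigma$ from uniqueness of the representation $a+b\sqrt{p}$, the identity $\sigma(f)(\sigma(\alpha))=\sigma(f(\alpha))$, and $\sigma(r)(-\sqrt{p})\neq 0$ by injectivity) check out, and the fact that multiplicity is unchanged when computed in $\F(\sqrt{p})[x]$ rather than $\R[x]$ is harmless. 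The paper instead never leaves $\F[x]$: it shows via the division algorithm that $m(x)=x^2-p$ divides any $\F$-polynomial vanishing at $\sqrt{p}$ or at $-\sqrt{p}$, writes $q=m^k h$ with $k$ the largest such power, and concludes both multiplicities equal $k$. Your automorphism argument is slicker and generalizes immediately (Galois conjugates of a root of an irreducible polynomial over $\F$ always occur with equal multiplicity), at the cost of setting up the extension field; the paper's argument is more elementary and self-contained, which suits its stated goal of including a proof for completeness. Your closing ``division argument'' alternative is essentially the paper's proof in slightly different clothing (coprimality in $\F(\sqrt{p})[x]$ plus the evaluation $s(\pm\sqrt{p})=a\pm b\sqrt{p}$ and linear independence of $1,\sqrt{p}$ over $\F$), and it too is correct.
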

\begin{proof}
Let $m(x)=x^2 - p$.  We first see that $m$ divides any polynomial that has a root at $\sqrt{p}$.
Let $h$ be a polynomial with coefficients in $\F$ and a root at $\sqrt{p}$.  By the division algorithm
$h= m g +r$ where $r$ is a polynomial with coefficients in $\F$ that is either the constant zero function or
a polynomial of degree at most $1$.  If $h(\sqrt{p})=0$ then $r$ must be the zero function, since no
nonzero polynomial of degree at most $1$ polynomial can have a root at $\sqrt{p}$, as $\sqrt{p} \not \in \F$.
The same argument holds for a polynomial with a root at $-\sqrt{p}$.

Now let $k$ be the largest power of $m$ that divides $q$, that is such that
$q=m^k h$ for some polynomial $h$ with coefficients in $\F$ and $m^{k+1}$ does not divide $q$ in $\F$.  By
definition, $m$ does not divide $h$, thus
$h$ cannot have a root at $\sqrt{p}$ or $-\sqrt{p}$.  This shows that the multiplicity of both $\sqrt{p},-\sqrt{p}$ as roots
of $q$ is $k$, and is
the same.
\end{proof}

We can use \thmref{thm:key} to show a lower bound on the rank of certain matrices in the following way.
\begin{theorem}
\label{thm:rank}
Let $\F$ be a subfield of the real numbers and $p$ a prime such that $\sqrt{p} \not \in \F$.  Let $A \in \F^{N \times N}$.
Then $\rank(\sqrt{p}I+A) \ge \ceil{\tfrac{N}{2}}$.
\end{theorem}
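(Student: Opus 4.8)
The plan is to relate the rank deficiency of $\sqrt{p}I + A$ to the multiplicity of $-\sqrt{p}$ as an eigenvalue of $A$, and then invoke \thmref{thm:key} to control that multiplicity. Concretely, by rank--nullity it suffices to show $\dim\ker(\sqrt{p}I + A) \le \floor{N/2}$, since then $\rank(\sqrt{p}I+A) = N - \dim\ker(\sqrt{p}I+A) \ge N - \floor{N/2} = \ceil{N/2}$.

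First I would observe that $\dim\ker(\sqrt{p}I + A)$ is exactly the geometric multiplicity of $-\sqrt{p}$ as an eigenvalue of $A$, and hence is at most the algebraic multiplicity of $-\sqrt{p}$, i.e.\ its multiplicity as a root of the characteristic polynomial $q(x) = \det(xI - A)$. The key point to be careful about here is that $q$ has coefficients in $\F$: this holds because $A \in \F^{N\times N}$, so every coefficient of $q$ is a polynomial (with integer coefficients) in the entries of $A$, and $\F$ is a field. Also $q$ is monic of degree exactly $N$.

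Next I would apply \thmref{thm:key} to $q$: since $\sqrt{p}\notin\F$, the multiplicities of $\sqrt{p}$ and of $-\sqrt{p}$ as roots of $q$ are equal; call this common value $k$. Because $p$ is prime (in particular $p\neq 0$), $\sqrt{p}$ and $-\sqrt{p}$ are distinct roots, so the sum of their multiplicities, $2k$, is at most $\deg q = N$. Thus $k \le \floor{N/2}$. Combining with the previous paragraph, $\dim\ker(\sqrt{p}I+A) \le k \le \floor{N/2}$, which gives the claimed bound.

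I do not expect a genuine obstacle in this argument; the whole content is packaging \thmref{thm:key} together with two standard linear-algebra facts (geometric multiplicity $\le$ algebraic multiplicity, and rank--nullity). The only subtlety worth stating explicitly is the one flagged above --- that the characteristic polynomial of a matrix over $\F$ again has coefficients in $\F$ --- and the minor bookkeeping that $\sqrt{p}\neq-\sqrt{p}$ so their multiplicities genuinely add.
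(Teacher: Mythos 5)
Your argument is correct and is essentially identical to the paper's own proof: both reduce to bounding the nullity of $\sqrt{p}I+A$ via the geometric-versus-algebraic multiplicity of $-\sqrt{p}$ for the characteristic polynomial over $\F$, and then invoke \thmref{thm:key} together with the degree bound $N$. Your explicit remarks that $q$ has coefficients in $\F$ and that $\sqrt{p}\neq-\sqrt{p}$ are fine, standard details left implicit in the paper.
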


\begin{proof}
We will show that the nullity of $\sqrt{p}I + A$ is at most $\floor{\tfrac{N}{2}}$.  The theorem then follows from the
rank-nullity theorem.

A vector $v$ is in the nullspace of $\sqrt{p}I+A$ if and only if $Av=-\sqrt{p}v$, meaning that $v$ is an eigenvector of
$A$ with eigenvalue $-\sqrt{p}$.  Thus the nullity of $\sqrt{p}I+A$ is equal to the geometric multiplicity of $-\sqrt{p}$
as an eigenvalue of $A$.  The geometric multiplicity of $-\sqrt{p}$ is in turn at most the algebraic multiplicity of
$-\sqrt{p}$ as a root of the characteristic polynomial $q(x)=\det(xI-A)$ of $A$.  The characteristic polynomial
$q(x)$ has all coefficients in $\F$ as all entries of $A$ are in $\F$.  Moreover, $q(x)$ is a polynomial of degree at
most $N$ and so has at most $N$ roots.  Applying \thmref{thm:key}, we see that the multiplicity of $-\sqrt{p}$ can be at
most $\floor{\tfrac{N}{2}}$ as it occurs with the same multiplicity as $\sqrt{p}$.
\end{proof}

\subsection{Application to the correlation polytope}\label{candidate}
A great insight of \cite{FioriniMassarPokuttaTiwaryDewolf2012} is to identify a concrete hard submatrix of the
slack matrix of the correlation polytope.  The
submatrix of the slack matrix of $\text{COR}_n$ they consider is $B_n(x,y) = (x^Ty-1)^2$ for $x,y \in \{0,1\}^n$.
This matrix is an instance of \emph{unique disjointness}---an entry is $1$ when strings are disjoint,
and $0$ when strings have a unique intersection.  Results from communication complexity
\cite{Razborov92, deWolf00} show that this matrix has exponential nonnegative rank, giving the desired lower
bound on linear extended formulation size.

For PSD-rank, however, this matrix is not a suitable candidate as $A_n=[x^Ty-1]_{x,y \in \{0,1\}^n}$
satisfies $A_n \circ A_n = B_n$ and has rank at most $n+1$.

We will instead consider the $2^n$-by-$2^n$ matrix $M_n$ defined as $M_n(x,y)=(x^Ty-1)(x^Ty-2)$ for
$x,y \in \{0,1\}^n$, proposed as a candidate to have large PSD-rank by one of the authors \cite{Lee12}.
This matrix still enjoys the unique disjointness property, but is no longer obviously
the entrywise square of a low rank matrix.  We show that in fact the square root rank of $M_n$ is exponential.
First, let us verify that it is a submatrix of the slack
matrix of the correlation polytope.
\begin{lemma}\label{lem:slack} The $2^n$-by-$2^n$ matrix
$M_n=[(x^Ty-1)(x^Ty-2)]_{x,y \in \{0,1\}^n}$ is a submatrix of the slack matrix of the
correlation polytope $\text{COR}_n$.
\end{lemma}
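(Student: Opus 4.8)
The plan is to mimic how Fiorini et al.\ realized the unique-disjointness matrix $B_n(x,y)=(x^Ty-1)^2$ as a submatrix of the slack matrix of $\text{COR}_n$ (via the facet inequalities $\Tr\big((2\diag(x)-xx^T)X\big)\le 1$) and to identify the analogue for $M_n$. The right candidate is the inequality
\[
\Tr\big((3\diag(x) - xx^T)\,X\big) \le 2, \qquad x \in \{0,1\}^n .
\]
First I would verify validity. For a vertex $X = yy^T$ with $y \in \{0,1\}^n$,
\[
\Tr\big((3\diag(x) - xx^T)\,yy^T\big) \;=\; 3\sum_{i} x_i y_i^2 - \Big(\sum_i x_i y_i\Big)^2 \;=\; 3t - t^2,
\]
where $t := x^Ty$ and we used $y_i^2 = y_i$. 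Since $t$ is a nonnegative integer, $t(3-t) \le 2$ with equality exactly for $t \in \{1,2\}$ (over the reals the expression would reach $9/4$, so integrality is what makes the bound $2$ hold — the familiar unique-disjointness phenomenon). Hence the inequality holds at every vertex, and therefore on all of $\text{COR}_n$.

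Next I would read off the slack. At the vertex $yy^T$ the slack of the inequality above is
\[
2 - \Tr\big((3\diag(x) - xx^T)\,yy^T\big) \;=\; 2 - 3t + t^2 \;=\; (t-1)(t-2) \;=\; (x^Ty-1)(x^Ty-2) \;=\; M_n(x,y),
\]
so the rows indexed by these inequalities, restricted to the vertex columns $\{\,yy^T : y \in \{0,1\}^n\,\}$, form precisely $M_n$.

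The remaining point — and the only place I expect any subtlety — is that to call $M_n$ a genuine \emph{submatrix} of the slack matrix one needs these inequalities (for $x \ne 0$) to be facet-defining for $\text{COR}_n$, since square root rank is monotone under deleting rows and columns (and under positive rescaling of rows) but not under taking arbitrary nonnegative combinations of rows. This is standard for the correlation polytope: it can be cited from the polyhedral literature, or checked directly by exhibiting enough affinely independent tight vertices $yy^T$ (those with $x^Ty \in \{1,2\}$) to pin down a facet of the $\binom{n+1}{2}$-dimensional polytope $\text{COR}_n$. The single degenerate value $x = 0$ yields the constant row $(2,\dots,2)$ and is handled exactly as the corresponding row of $B_n$ in earlier work (e.g.\ by restricting to $x \ne 0$, which does not affect the lower bound). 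Granting this, the lemma is immediate from the validity check and the slack computation above.
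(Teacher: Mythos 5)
Your proposal is essentially the paper's own proof: you use the same valid inequality $\Tr\bigl((3\diag(x)-xx^T)\,yy^T\bigr)\le 2$ (equivalently $(x^Ty-1)(x^Ty-2)=\Tr\bigl((xx^T-3\diag(x))yy^T\bigr)+2\ge 0$), verified via integrality of $x^Ty$, and read off the slack at the vertex $yy^T$ exactly as the paper does. The facet-definedness caveat you raise is not addressed in the paper at all (it implicitly treats slacks of valid inequalities as rows of the slack matrix, as is customary in this literature), so your remark is an extra precaution rather than a different approach.
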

\begin{proof}
For strings $x,y \in \{0,1\}^n$ note that $\Tr(xx^Tyy^T)=(x^Ty)^2$, and $\Tr(\diag(x)yy^T)=x^Ty$,
where $\diag(x)$ is the diagonal matrix whose diagonal is
$x$. The polynomial $(z-1)(z-2)=z^2-3z+2$ is nonnegative on integers $z$, thus for any $x \in \{0,1\}^n$,
 vertex $yy^T$ of the correlation
polytope satisfies the linear inequality
\begin{equation}
\label{eq:slack}
(x^Ty-1)(x^Ty-2)=\Tr((xx^T-3\diag(x))yy^T)+2\geq0.
\end{equation}
The entry $M_n(x,y)$ for $x, y \in \{0,1\}^n$ is thus the slack of the vertex $yy^T$ with the inequality
\eqnref{eq:slack} corresponding to $x$.
\end{proof}

For the lower bound on square root rank, we will actually work with a submatrix of $M_n$.  It is this matrix that
we will focus on for the remainder of the paper.
\begin{definition}
\label{def:prime_matrix}
Fix $n$ and let $p$ be the prime closest to $n/2$ (in case of a tie, pick the smaller one).  Define the matrix $P_n$ to be
the submatrix of $M_n$ restricted to strings of Hamming weight $p+1$.
\end{definition}
Note that the size of $P_n$ is $\binom{n}{p+1}$.  By Bertand's
Postulate (less well known as the Bertrand-Chebyshev theorem), for
any integer $m>1$, there is always at least one prime number $q$
such that $m<q<2m$. Choosing $m=\ceil{n/3-1}$, then there exists a
prime number in the interval $\left(\lceil n/3\rceil-1,2\cdot\lceil
n/3\rceil-2\right)$.  This shows that the size of $P_n$ is at least
$\binom{n}{\ceil{n/3}}$.

\begin{theorem}
\label{thm:main}
Let $n$ be a positive integer and let $N$ be the size of $P_n$.  Then
\[
\rootrank(P_n) \ge \ceil{\frac{N}{2}} \enspace .
\]
In particular, $\rootrank(P_n) \ge 3^{n/3-1}$.
\end{theorem}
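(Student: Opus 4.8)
The plan is to show that \emph{every} matrix $B$ with $P_n = B\circ B$ already satisfies $\rank(B)\ge\ceil{N/2}$, by transforming $B$ into exactly the shape handled by \thmref{thm:rank}. So fix a $B$ of minimum rank with $P_n = B\circ B$ (which we may take to be real, as $P_n$ is nonnegative), so that $\rootrank(P_n)=\rank(B)$ and every entry has the form $B(x,y)=\sigma(x,y)\sqrt{(x^Ty-1)(x^Ty-2)}$ with $\sigma(x,y)\in\{\pm1\}$, where $x,y$ range over the weight-$(p+1)$ strings. On the diagonal, $x=y$ forces $x^Ty=p+1$, so $B(x,x)=\sigma(x,x)\sqrt{p(p-1)}=\sigma(x,x)\sqrt p\,\sqrt{p-1}$.

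First I would normalize the diagonal signs. Let $D$ be the diagonal $\pm1$ matrix with $D(x,x)=\sigma(x,x)$; since $D$ is invertible, $\rank(DB)=\rank(B)$, and $DB$ has every diagonal entry equal to $\sqrt p\,\sqrt{p-1}$ while its off-diagonal entries are still of the form $\pm\sqrt{(x^Ty-1)(x^Ty-2)}$. After replacing $B$ by $DB$ we may therefore write $B=\sqrt p\,\sqrt{p-1}\,I+A$ with $A$ of zero diagonal. This is where the shape of $M_n$ pays off: for \emph{distinct} weight-$(p+1)$ strings one has $x^Ty\le p$ (the value $p+1$ would force $x=y$), so each off-diagonal quantity $(x^Ty-1)(x^Ty-2)$ is a product of integers in $\{0,1,\dots,p-1\}$ and hence has no prime factor $\ge p$, whereas the diagonal carries exactly one factor of $\sqrt p$.

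Assume $p\ge3$; when $p=2$ one necessarily has $n\le5$, two distinct weight-$3$ strings cannot be disjoint, so every off-diagonal entry of $P_n$ vanishes, $B$ is $\sqrt2$ times a $\pm1$ diagonal matrix, and $\rank(B)=N$ outright. Dividing by $\sqrt{p-1}\ne0$ preserves rank, so $\rank(B)=\rank\bigl(\sqrt p\,I+\tfrac{1}{\sqrt{p-1}}A\bigr)$. Let $\F=\Q\bigl(\sqrt q:q\text{ prime},\ q<p\bigr)$, a subfield of $\R$. Each off-diagonal entry of $\tfrac{1}{\sqrt{p-1}}A$ equals $\pm\tfrac{1}{p-1}\sqrt{(x^Ty-1)(x^Ty-2)(p-1)}$, and the radicand is a nonnegative integer all of whose prime factors are $<p$; hence its square root is a rational multiple of a product of $\sqrt q$'s with $q<p$ prime, so $\tfrac{1}{\sqrt{p-1}}A\in\F^{N\times N}$. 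On the other hand $\sqrt p\notin\F$: this is the classical fact that for distinct primes $p,q_1,\dots,q_t$ the field $\Q(\sqrt{q_1},\dots,\sqrt{q_t})$ has degree $2^t$ over $\Q$ and adjoining $\sqrt p$ strictly increases the degree (a short induction, or a citation to the theory of multiquadratic extensions). Now \thmref{thm:rank} applies to $\tfrac{1}{\sqrt{p-1}}A\in\F^{N\times N}$ and gives $\rank\bigl(\sqrt p\,I+\tfrac{1}{\sqrt{p-1}}A\bigr)\ge\ceil{N/2}$, whence $\rootrank(P_n)\ge\ceil{N/2}$.

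The ``in particular'' clause then follows from the bound $N\ge\binom{n}{\ceil{n/3}}$ recorded after \defref{def:prime_matrix} together with a routine binomial estimate (for instance $\binom{n}{\ceil{n/3}}\ge(n/\ceil{n/3})^{\ceil{n/3}}$), which yields $\ceil{N/2}\ge3^{n/3-1}$. The step I expect to be the real work is the field bookkeeping in the third paragraph: one must simultaneously guarantee that, after dividing by $\sqrt{p-1}$, every off-diagonal entry lands in $\F$ \emph{and} that $\sqrt p$ escapes $\F$, and both rest on the same two features of the construction — restricting to weight $p+1$ so that the prime $p$ never appears as a factor off the diagonal, and the primality of $p$ so that $\sqrt p$ genuinely lies outside the field generated by the smaller primes.
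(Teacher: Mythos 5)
Your proposal is correct and follows essentially the same route as the paper's proof: normalize the diagonal by an invertible diagonal matrix, rescale by $\sqrt{p-1}$ so the diagonal becomes $\sqrt{p}$, observe that every off-diagonal radicand has only prime factors below $p$ so the off-diagonal entries land in $\F=\Q(\sqrt{q}:q<p \text{ prime})$ while $\sqrt{p}\notin\F$, and then invoke \thmref{thm:rank}. The only differences are cosmetic: the paper combines the sign flip and the $1/\sqrt{p-1}$ scaling into a single diagonal matrix, and you additionally spell out the $p=2$ edge case and the prime-factorization bookkeeping, which the paper leaves implicit.
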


\begin{proof}
Let $B$ be a matrix such that $B \circ B = P_n$.  We will lower bound the rank of $B$.

Note that all diagonal entries of $P_n$ are equal to
$p(p-1)$.  Thus all diagonal entries of $B$ are $\pm \sqrt{p(p-1)}$.
Further,  all off diagonal entries
of $P_n$ are of the form $s(s-1)$, where $s$ is an integer strictly smaller than $p$.

By multiplying $B$ on the left by a diagonal matrix $D$ whose diagonal entries are $\pm \tfrac{1}{\sqrt{p-1}}$ we
can obtain a matrix $C=DB$ with the same rank as $B$ and whose diagonal entries are all $\sqrt{p}$.  Further,
all off diagonal entries of $C$ are strictly less than $\sqrt{p}$.

Let $p_1, \ldots, p_t$ be an enumeration of all the primes strictly less than $p$, and let
$\F=\Q(\sqrt{p_1}, \ldots, \sqrt{p_t})$.  Note that $\sqrt{p} \not \in \F$ (see exercise 6.15 of \cite{Stewart04}).  On
the other hand, all off diagonal entries of $C$ are in $\F$.  Thus $C=\sqrt{p}I + A$ for a matrix $A$ with all entries
in $\F$.  Applying \thmref{thm:key} we find that the rank of $C$ is at least $\ceil{\tfrac{N}{2}}$.
\end{proof}

\section{An extension to more general decompositions}
We have now shown a lower bound on the square root rank of $P_n$.  In this section we see that this
lower bound can be leveraged into bounds on more general kinds of PSD factorizations.  We will look at
decompositions of the form
\begin{equation}
\label{eq:decomp}
M = \sum_{j=1}^{d^2} N_j \circ N_j \enspace.
\end{equation}
Let $k$ be the maximum rank of $N_i$ over $j \in [d^2]$ in such a decomposition.  If we can show that
$kd^2 > r$ for any decomposition as in~(\ref{eq:decomp}) then by \lemref{lem:decomposition} this would mean
the PSD-rank of $M$ is at least $r^{1/3}$.

We are able to do this provided certain restrictions are put on the matrices $N_i$.
Namely, we can show the following.
\begin{theorem}
\label{thm:extension}
Let $P_n$ be as in \defref{def:prime_matrix} and consider a decomposition of the form
\[
P_n=\sum_{j=1}^{d^2}(B_j\circ\sqrt{P_n})\circ(B_j\circ\sqrt{P_n}),
\]
where each matrix $B_j$ has rational entries.  Let $k$ the maximum
of $\rank(B_j \circ \sqrt{P_n})$ over $j \in [d^2]$.  Then
$kd^2 \ge \tfrac{1}{2} \binom{n}{\ceil{\frac{n}{3}}}$.
\end{theorem}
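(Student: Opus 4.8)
The plan is to reduce the theorem to the rank bound of \thmref{thm:rank} by finding, inside the decomposition, a single summand that must have rank proportional to $N = \binom{n}{p+1}$. The starting point is a counting argument: since $P_n = \sum_{j=1}^{d^2} (B_j\circ\sqrt{P_n})\circ(B_j\circ\sqrt{P_n})$, restricting attention to the diagonal gives, for each string $x$ of Hamming weight $p+1$, the identity $p(p-1) = P_n(x,x) = \sum_{j=1}^{d^2} B_j(x,x)^2 \, p(p-1)$, so $\sum_{j=1}^{d^2} B_j(x,x)^2 = 1$ for every such $x$. Hence for each $x$ there is at least one index $j$ with $B_j(x,x)^2 \ge 1/d^2$. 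By the pigeonhole principle, some fixed index $j_0 \in [d^2]$ has $B_{j_0}(x,x)^2 \ge 1/d^2$ for at least $N/d^2$ of the diagonal positions $x$. Let $S$ be this set of strings, $|S| \ge N/d^2$.

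Next I would pass to the principal submatrix. Let $C = B_{j_0}\circ\sqrt{P_n}$ and let $C_S$ be the submatrix of $C$ with rows and columns indexed by $S$. Clearly $\rank(C_S) \le \rank(C) \le k$, so it suffices to lower bound $\rank(C_S)$. On $S$, every diagonal entry of $C_S$ equals $B_{j_0}(x,x)\sqrt{p(p-1)}$ with $|B_{j_0}(x,x)\sqrt{p(p-1)}| \ge \sqrt{p(p-1)}/d$; and, crucially, off-diagonal entries $C_S(x,y)$ for $x\neq y$ equal $B_{j_0}(x,y)\sqrt{s(s-1)}$ where $s = x^Ty < p$ is an integer, since $x,y$ have weight $p+1$ but are distinct so their intersection size is at most $p$. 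Exactly as in the proof of \thmref{thm:main}, I would left- and right-multiply $C_S$ by suitable diagonal sign/scaling matrices so that every diagonal entry becomes $\sqrt{p}$; since $B_{j_0}$ has rational entries, the diagonal rescaling factors $\pm \sqrt{p}/(B_{j_0}(x,x)\sqrt{p(p-1)})$ are not rational, but that is harmless — it is only the \emph{off-diagonal} entries that need to lie in the field $\F = \Q(\sqrt{p_1},\dots,\sqrt{p_t})$ (the $p_i$ the primes below $p$), and after rescaling the off-diagonal entry in row $x$, column $y$ becomes $B_{j_0}(x,y)\sqrt{s(s-1)} \cdot (\text{rescaling factors})$; one must check these stay in $\F$ and are strictly less than $\sqrt{p}$ in absolute value. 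The cleaner route is to rescale row and column $x$ by the single rational-over-$\sqrt{p(p-1)}$ factor $1/(B_{j_0}(x,x)\sqrt{p-1})$ times a sign, giving diagonal $\sqrt{p}$ and off-diagonal entries $\frac{B_{j_0}(x,y)}{B_{j_0}(x,x)B_{j_0}(y,y)}\cdot\frac{\sqrt{s(s-1)}}{p-1}$, which indeed lie in $\F$ and are bounded by $\sqrt{p}$ in modulus. Then the rescaled matrix is $\sqrt{p}I + A$ with $A \in \F^{|S|\times|S|}$, and since $\sqrt{p}\notin\F$, \thmref{thm:rank} yields $\rank(C_S) \ge \lceil |S|/2\rceil \ge \frac{1}{2}\cdot\frac{N}{d^2}$. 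Combining, $k \ge \rank(C_S) \ge \frac{N}{2d^2}$, i.e. $kd^2 \ge \frac{1}{2}\binom{n}{p+1} \ge \frac{1}{2}\binom{n}{\ceil{n/3}}$ by the Bertrand's-postulate bound recorded after \defref{def:prime_matrix}.

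The main obstacle — and the point that needs the most care — is the diagonal rescaling step when $B_{j_0}$ has small but nonzero rational diagonal entries: one must confirm that dividing by $B_{j_0}(x,x)B_{j_0}(y,y)$ keeps the off-diagonal entries inside $\F$ (it does, since these are rationals) and, more delicately, that these entries remain strictly below $\sqrt p$ in absolute value so that the matrix genuinely has the form $\sqrt p I + A$ with no diagonal contribution hiding in $A$. A subtle case is $s=0$ or $s=1$, where $s(s-1)=0$ and the off-diagonal entry vanishes outright, which is fine. A second subtlety is that the pigeonhole index $j_0$ is chosen \emph{after} seeing all diagonals, so the restriction to $S$ must be applied to the full matrix $C$ before any rescaling; this is purely organizational. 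Everything else is a direct transcription of the argument already used for \thmref{thm:main}.
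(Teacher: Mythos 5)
Your proposal is correct in substance but takes a genuinely different route from the paper. The paper never isolates a single summand: it invokes \lemref{lem:sigma} to construct pairwise anticommuting rational matrices $\sigma_j$, forms the single matrix $C=\sum_j (B_j\circ\sqrt{P_n})\otimes\sigma_j$ of size $N2^{d^2}$ and rank at most $kd^2\cdot 2^{d^2}$, uses the same diagonal identity $\sum_j B_j(i,i)^2=1$ to build a block-diagonal matrix $D$ so that $DC=\sqrt{p}\,I+A$ with $A$ over $\F$, and then applies \thmref{thm:rank} to this large matrix. Your argument --- from $\sum_j B_j(x,x)^2=1$ pigeonhole a single index $j_0$ with $B_{j_0}(x,x)\neq 0$ on a set $S$ of at least $N/d^2$ rows, pass to the principal submatrix of $B_{j_0}\circ\sqrt{P_n}$ on $S$, rescale rows, and apply \thmref{thm:rank} --- avoids \lemref{lem:sigma} entirely, is more elementary, and delivers the same inequality $kd^2\ge N/2$ (indeed the slightly more informative fact that one summand alone has rank at least $N/(2d^2)$). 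Two small repairs are needed in your write-up. First, for distinct $x,y$ of weight $p+1$ the inner product can equal $p$ (not only values below $p$); with $s=x^Ty$ the entry of $\sqrt{P_n}$ is $\sqrt{(s-1)(s-2)}$ with $s\le p$, and since both factors are at most $p-1$ the entry still lies in $\F$, so nothing breaks. Second, your ``cleaner'' symmetric rescaling is miscalculated: scaling both row and column $x$ by $1/(B_{j_0}(x,x)\sqrt{p-1})$ makes the diagonal entry $\sqrt{p}/(B_{j_0}(x,x)\sqrt{p-1})$ rather than $\sqrt{p}$, and the claimed bound of $\sqrt{p}$ on the rescaled off-diagonal entries is false in general. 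But your first, row-only scaling $D=\diag\bigl(1/(B_{j_0}(x,x)\sqrt{p-1})\bigr)$ works exactly as in the proof of \thmref{thm:main}, and \thmref{thm:rank} imposes no magnitude condition on the entries of $A$ --- only membership in $\F$ --- so the magnitude worry is moot once the diagonal is made exactly $\sqrt{p}$.
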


For the proof of the theorem we will use the following lemma.  We delay the proof of this lemma until after the
proof of the theorem.
\begin{lemma}\label{lem:sigma}
For any positive integer $\ell$, there are matrices with rational entries $\sigma_1, \ldots, \sigma_\ell$ each of
size $4^{\ceil{\ell/2}}$ such that for any real numbers $a_1, \ldots, a_k$
\[
\left(\sum_j a_j \sigma_j \right) \left( \sum_j a_j \sigma_j \right)
= \left(\sum_j a_j^2 \right) I_{4^{\ceil{\ell/2}}} \enspace .
\]
\end{lemma}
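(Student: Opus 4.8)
The plan is to realize the $\sigma_j$ as a rational surrogate for the Pauli matrices: a family of $\ell$ pairwise anticommuting involutions over $\Q$. First I would note that the desired identity is purely formal. If $\sigma_i\sigma_j=-\sigma_j\sigma_i$ for all $i\neq j$ and $\sigma_j^2=I$ for all $j$, then for any reals $a_1,\dots,a_\ell$,
\[
\Bigl(\sum_{j} a_j\sigma_j\Bigr)^2=\sum_{j} a_j^2\,\sigma_j^2+\sum_{i<j} a_i a_j\,(\sigma_i\sigma_j+\sigma_j\sigma_i)=\Bigl(\sum_{j} a_j^2\Bigr)I,
\]
so it suffices to exhibit $\ell$ pairwise anticommuting rational involutions of the prescribed size.

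Next I would fix the $2\times 2$ integer matrices $X=\left(\begin{smallmatrix}0&1\\1&0\end{smallmatrix}\right)$ and $Z=\left(\begin{smallmatrix}1&0\\0&-1\end{smallmatrix}\right)$, which satisfy $X^2=Z^2=I_2$ and $XZ=-ZX$, and work inside $\Q^{2^\ell}$. Define the Jordan--Wigner-style tensor products
\[
\sigma_j=Z^{\otimes(j-1)}\otimes X\otimes I_2^{\otimes(\ell-j)},\qquad j=1,\dots,\ell.
\]
Each $\sigma_j$ has entries in $\{0,\pm1\}\subset\Q$, and $\sigma_j^2=I_{2^\ell}$ because every tensor factor squares to $I_2$. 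For anticommutation I would use the elementary fact that two tensor products $\bigotimes_m A_m$ and $\bigotimes_m B_m$ of $2\times 2$ matrices, each of whose slots is a pair that either commutes or anticommutes, satisfy $\bigotimes_m(A_mB_m)=(-1)^{c}\bigotimes_m(B_mA_m)$, where $c$ is the number of anticommuting slots. For $i<j$, comparing $\sigma_i$ and $\sigma_j$ slot by slot, the only anticommuting slot is slot $i$, where the $X$ of $\sigma_i$ meets a $Z$ from the string of $\sigma_j$; every other slot contains a pair drawn from $\{X,X\}$, $\{Z,Z\}$, or one involving $I_2$, hence commutes. So $c=1$ and $\sigma_i\sigma_j=-\sigma_j\sigma_i$. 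Finally, since $4^{\ceil{\ell/2}}/2^{\ell}$ equals $1$ when $\ell$ is even and $2$ when $\ell$ is odd, I would replace each $\sigma_j$ by $\sigma_j\otimes I_{4^{\ceil{\ell/2}}/2^\ell}$ to land on exactly the claimed size $4^{\ceil{\ell/2}}$, which preserves rationality, the relations $\sigma_j^2=I$, and pairwise anticommutation.

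I do not expect a genuine obstacle here; the one subtlety is that we cannot simply quote the complex Pauli matrices, since $Y=\left(\begin{smallmatrix}0&-i\\i&0\end{smallmatrix}\right)$ is not rational (not even real), and in dimension $2$ there are only two anticommuting real involutions, namely $X$ and $Z$. The Jordan--Wigner construction sidesteps this because it needs only a ``flip'' at the active site and a ``phase'' string along the tail, both of which are filled by the integer matrices $X$ and $Z$; no analogue of $Y$ is required once one asks for anticommuting involutions rather than ladder operators. Everything else---rationality, the square being $I$, and the parity count of anticommuting slots---is routine.
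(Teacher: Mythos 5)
Your proof is correct, and it rests on the same core idea as the paper's: build $\ell$ pairwise anticommuting rational involutions as tensor products of Pauli-type matrices, so that the identity follows formally from $\sigma_j^2=I$ and $\sigma_i\sigma_j=-\sigma_j\sigma_i$. The only difference is in how the family is packed: the paper works with $4\times 4$ real versions of $X$, $Y$, $Z$ and places \emph{two} generators per tensor slot (a common $Z$-prefix followed by $X$ or $Y$), so that $\ceil{\ell/2}$ slots of dimension $4$ give exactly the stated size $4^{\ceil{\ell/2}}$; you instead avoid any analogue of $Y$ by using only the $2\times 2$ rational $X$ and $Z$ in a one-generator-per-slot Jordan--Wigner string of length $\ell$, landing in dimension $2^\ell$, and then tensor with a small identity to match $4^{\ceil{\ell/2}}$. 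Your slot-by-slot anticommutation count (exactly one anticommuting slot for $i<j$) and the padding step are both sound, so the two constructions are interchangeable for the purposes of \thmref{thm:extension}; yours is marginally more economical in dimension when $\ell$ is odd, while the paper's avoids the final padding step.
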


\begin{proof}[Proof of \thmref{thm:extension}]
Let $k$ be as in the theorem, and for simplicity assume that $d$ is even---the case where $d$ is odd
can be verified in the same way.  Let $N$ be the size of $P_n$.
Let $\sigma_1, \ldots,
\sigma_{d^2}$ be matrices defined in \lemref{lem:sigma} each of size $2^{d^2}$.

For each $j\in[d^2]$, we form a new
matrix $A_j = (B_j\circ\sqrt{P_n}) \otimes \sigma_j$. This matrix has
size $N2^{d^2}$. Further we let
\[
C= \sum_{j=1}^{d^2} A_j \enspace.
\]
Since each $B_j\circ\sqrt{P_n}$ is of rank at most $k$, it follows that the rank of $C$ will be
at most $kd^2\cdot2^{d^2}$.

We now lower bound the rank of $C$.  To do this we first define a
block diagonal matrix $D$ of size $N2^{d^2}$ with blocks of size $2^{d^2}$.
The $i^{th}$ diagonal block is defined as $\tfrac{1}{\sqrt{p-1}}\sum_j B_j(i,i) \sigma_j$.
By \lemref{lem:sigma}
\[
\left(\sum_j B_j(i,i) \sigma_j\right)\left(\sum_j B_j(i,i) \sigma_j\right)=I_{2^{d^2}}
\]
holds for every $i \in [N]$, thus the matrix
$D$ has full rank and $DC$ will have the same rank as $C$.  We will
actually lower bound the rank of $DC$.

We claim that the diagonal blocks of $DC$ are $\sqrt{p}\cdot I_{2^{d^2}}$.  Again by \lemref{lem:sigma},
the $i^{th}$ diagonal block of
$DC$ will be
\begin{align*}
\left (\frac{1}{\sqrt{p-1}} \sum_j B_j(i,i) \sigma_j\right)\left(\sum_j B_j(i,i)\sqrt{P_n(i,i)} \sigma_j\right)& =
\sum_j B_j(i,i)^2 \sqrt{\frac{P_n(i,i)}{p-1}} I_{2^{d^2}} \\
&=\sqrt{p} I_{2^{d^2}} \enspace .
\end{align*}

Now consider entries in the off diagonal blocks of $DC$.  As before let $p_1, \ldots, p_t$ be an
enumeration of the primes strictly less than $p$ and set $\F=\Q(\sqrt{p_1}, \ldots, \sqrt{p_t})$.  As the
$B_j$ and $\sigma_j$ matrices are rational, the off diagonal blocks of each $A_j$ have entries in
$\F$.  Further, $D$ is a matrix with entries in $\F$, thus the off diagonal blocks of $DC$ are also in $\F$.  As
$\sqrt{p} \not \in \F$ we can again apply \thmref{thm:rank} to conclude
$\rank(C) \ge \tfrac{1}{2} N 2^{d^2}$.  This implies $kd^2 \ge \tfrac{N}{2}$, which gives the theorem.
\end{proof}

\begin{proof}[Proof of \lemref{lem:sigma}]
Define
\[
X=\begin{bmatrix}
0&0&1&0 \\
0&0&0&1\\
1&0&0&0\\
0&1&0&0
\end{bmatrix},
Y=\begin{bmatrix}
0&0&0&1 \\
0&0&-1&0 \\
0&-1&0&0 \\
1&0&0&0
\end{bmatrix},
Z=\begin{bmatrix}
1&0&0&0\\
0&1&0&0\\
0&0&-1&0 \\
0&0&0&-1
\end{bmatrix} \enspace .
\]
These are real versions of the Pauli matrices.  They satisfy $XY=-YX, XZ=-ZX, YZ=-ZY$
and $X^2=Y^2=Z^2=I_4$.  Define
\begin{align*}
\sigma_{2j+1} &= Z^{\otimes j} \otimes Y \otimes I_4^{\otimes (\ceil{\ell/2} -j - 1} \\
\sigma_{2j} &= Z^{\otimes j} \otimes X \otimes I_4^{\otimes
(\ceil{\ell/2} -j - 1}
\end{align*}
Any $\sigma_i, \sigma_j$ for $i \ne j$ anti-commute, while
$\sigma_i^2=I_{4^{\ceil{\ell/2}}}$ which gives the property we
need.
\end{proof}

\section{Perspective}
There can be an unbounded gap between the square root rank and the
PSD-rank of a matrix.  Fawzi et al.
\cite{FawziGouveiaParriloRobinsonThomas14} gave an example of a
family of $k$-by-$k$ matrices with square root rank $k$ and PSD-rank
$2$.  Let $n_1, \ldots, n_k$ be an increasing sequence of integers
such that $2n_i-1$ is prime for every $i \in [k]$.  Define
$Q=[n_i+n_j-1]_{i,j \in [k]}$.  It can be easily seen that $Q$ has
normal rank and PSD-rank $2$, yet Fawzi et al. proved that the
square root rank is full.  This proof was the inspiration for our
\thmref{thm:main}.

We now give another example of a separation between square root rank and PSD-rank.  This example shows
the difficulties of generalizing our approach to show lower bounds on the PSD-rank itself.  Even
decompositions of the form studied in \thmref{thm:extension} have severe limitations.

Define a matrix indexed
by $x,y \in \{0,1\}^n$ as $F_n(x,y)=x^Ty(x^Ty-1)$.  This matrix is also a slack matrix of the correlation polytope
as can be verified by a very similar proof to \lemref{lem:slack}.
It can also be verified that the proof \thmref{thm:main} can be simply modified to show that $F_n$ has exponential
square root rank, and even
that the analogue of \thmref{thm:extension} holds for $F_n$.

On the other hand, the PSD-rank of $F_n$ is \emph{small}.  In fact, even the nonnegative rank of $F_n$ is small.
\begin{proposition}
\[
\rank_+(F_n) \le \binom{n}{2} \enspace .
\]
\end{proposition}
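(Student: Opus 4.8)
The plan is to exhibit an explicit nonnegative rank factorization of $F_n$ into $\binom{n}{2}$ rank-one nonnegative matrices, by exploiting the combinatorial identity $x^Ty(x^Ty-1) = (x^Ty)^2 - x^Ty = \sum_{i \ne j} x_i x_j y_i y_j$, where the sum runs over \emph{ordered} pairs $(i,j)$ with $i \ne j$. Indeed, $(x^Ty)^2 = \sum_{i,j} x_i x_j y_i y_j$ and $x^Ty = \sum_i x_i y_i = \sum_i x_i^2 y_i^2$ (since entries are $0/1$), so the diagonal terms cancel and only the $i \ne j$ terms survive. Grouping the ordered pair $(i,j)$ with $(j,i)$, we get $F_n(x,y) = \sum_{\{i,j\}} 2 x_i x_j y_i y_j$, a sum over the $\binom{n}{2}$ unordered pairs.

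From here the factorization is immediate: for each unordered pair $\{i,j\}$ with $i < j$, define the rank-one matrix $G_{\{i,j\}}(x,y) = 2\, (x_i x_j)(y_i y_j)$, i.e.\ $G_{\{i,j\}} = 2\, u_{ij} v_{ij}^T$ where $u_{ij}$ is the $\{0,1\}$-vector indexed by $x \in \{0,1\}^n$ with $u_{ij}(x) = x_i x_j$, and $v_{ij}$ is the corresponding vector in $y$. Each $G_{\{i,j\}}$ is a nonnegative rank-one matrix, and by the identity above $F_n = \sum_{\{i,j\}} G_{\{i,j\}}$. Since there are exactly $\binom{n}{2}$ such pairs, this is a decomposition of $F_n$ into $\binom{n}{2}$ nonnegative rank-one matrices, so $\rank_+(F_n) \le \binom{n}{2}$.

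The only step requiring any care is the cancellation of the diagonal terms, and this is where one must be slightly attentive: it uses crucially that $x,y$ range over $\{0,1\}^n$ so that $x_i^2 = x_i$ and $y_i^2 = y_i$, making $\sum_i x_i^2 y_i^2 = \sum_i x_i y_i = x^T y$ exactly equal to the diagonal contribution to $(x^Ty)^2$. There is no real obstacle here; the identity is an exact polynomial identity on the Boolean cube. One could optionally remark that, combining with \factref{ft:rk+}, this also bounds $\prank(F_n) \le \binom{n}{2}$, so that $F_n$ (which has exponential square root rank by the modification of \thmref{thm:main}) exhibits an exponential gap between square root rank and both PSD-rank and nonnegative rank, reinforcing the perspective that the algebraic method of this paper cannot, without new ideas, reach the PSD-rank.
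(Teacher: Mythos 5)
Your proof is correct. The identity $x^Ty(x^Ty-1)=(x^Ty)^2-x^Ty=\sum_{i<j}2x_ix_jy_iy_j$ holds exactly on the Boolean cube (the diagonal terms of $(x^Ty)^2$ cancel $x^Ty$ precisely because $x_i^2=x_i$ and $y_i^2=y_i$), each matrix $\bigl[2x_ix_jy_iy_j\bigr]_{x,y}$ is nonnegative and rank one, and there are $\binom{n}{2}$ of them, so the bound follows. The paper takes a recursive route instead: ordering rows and columns by the last coordinate, it writes $F_{n+1}$ in block form with blocks $F_n$ and $F_n+D_n$, where $D_n=[2x^Ty]_{x,y\in\{0,1\}^n}$ has nonnegative rank at most $n$, and then invokes subadditivity of $\rank_+$ under sums and tensor products to obtain $\rank_+(F_{n+1})\le\rank_+(F_n)+n$, which solves to $\binom{n}{2}$ since $F_1=0$. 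These are really two presentations of the same factorization: unrolling the paper's recursion, the term $D_j$ introduced when passing from $F_j$ to $F_{j+1}$ contributes exactly the rank-one factors $2x_ix_jy_iy_j$ for $i<j$. Your version is more direct and exhibits the rank-one nonnegative factors explicitly in one step, while the paper's version packages the count through generic properties of nonnegative rank; your closing remark that, via \factref{ft:rk+}, this also bounds $\prank(F_n)$ and hence separates square root rank from PSD-rank is exactly the point the paper makes with this proposition.
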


\begin{proof}
We recursively upper bound the rank of $F_n$.
The matrix $F_1$ is the all zero matrix and has nonnegative rank $0$.  Ordering the rows and columns of
$F_{n+1}$ by lexicographical order of $x \in \{0,1\}^n$ we can see
\[
F_{n+1} =
\begin{bmatrix}
F_n & F_n \\
F_n & F_n +D_n
\end{bmatrix}\enspace,
\]
where $D_n=[2x^Ty]_{x,y \in \{0,1\}^n}$. The matrix $D_n$ has nonnegative rank at most $n$.  Now using
$\rank_+(A+B) \le \rank_+(A) + \rank_+(B)$ and $\rank_+(A \otimes B) \le \rank_+(A) \rank_+(B)$ we find
$\rank_+(F_{n+1}) \le \rank_+(F_n)+n$.  Solving the recurrence gives $\rank_+(F_n) \le \binom{n}{2}$.
\end{proof}

This example shows that, while our bounds can be powerful for the square root rank, this approach is not likely
to give exponential lower bounds on the PSD-rank of the correlation polytope.  Indeed the techniques used in
\cite{LeeSR14} are quite different from those studied here.

\paragraph{Acknowledgments.}
Troy Lee and Zhaohui Wei are supported by the Singapore National
Research Foundation under NRF RF Award No.~NRF-NRFF2013-13.

\bibliographystyle{alpha}
\bibliography{nnegrk}

\end{document}